\newtheorem{theorem}{Theorem}
\newtheorem{crl}{Corollary}
\title{\LARGE \bf
Constrained Optimal Tracking Control of Unknown Systems: \\A Multi-Step Linear Programming Approach*
}
\author{Alexandros Tanzanakis and John Lygeros$^{1}$
\thanks{*This research work was supported by the European Research Council (ERC) under the project OCAL, grant number 787845.}
\thanks{$^{1}$All authors are with the Department of Information Technology and Electrical Engineering, ETH Zurich, Switzerland, {\tt\small \{atanzana,jlygeros\}@ethz.ch}}%
}
\begin{document}

\maketitle
\thispagestyle{empty}
\pagestyle{empty}

\begin{abstract}
We study the problem of optimal state-feedback tracking control for unknown discrete-time deterministic systems with input constraints. To handle input constraints, state-of-art methods utilize a certain nonquadratic stage cost function, which is sometimes limiting real systems. Furthermore, it is well known that Policy Iteration (PI) and Value Iteration (VI), two widely used algorithms in data-driven control, offer complementary strengths and weaknesses. In this work, a two-step transformation is employed, which converts the constrained-input optimal tracking problem to an unconstrained augmented optimal regulation problem, and allows the consideration of general stage cost functions. Then, a novel multi-step VI algorithm based on Q-learning and linear programming is derived. The proposed algorithm improves the convergence speed of VI, avoids the requirement for an initial stabilizing control policy of PI, and computes a constrained optimal feedback controller without the knowledge of a system model and stage cost function. Simulation studies demonstrate the reliability and performance of the proposed approach.
\end{abstract}
\section{INTRODUCTION}
Reinforcement Learning (RL) \cite{c1} is currently in the spotlight of the modern control engineering community. Closely related to approximate dynamic programming (ADP), it seeks to approximately solve the Hamilton-Jacobi-Bellman (HJB) or Bellman equations, and hence obtain an approximate optimal control policy. The core ingredients in many state-of-art approaches are Q-learning \cite{c2} and neurodynamic programming \cite{c3}, which provide reliable model-free optimal control. Due to the complex nature of solving the Bellman/HJB equations analytically \cite{c4}, \cite{c5}, ADP utilizes an Actor-Critic structure \cite{c6}, \cite{c7} for the approximation of the value function or Q-function and control policy with parametric function approximators (e.g. neural networks). The Bellman/HJB equations can be approximately solved by using iterative methods such as Policy Iteration (PI) and Value Iteration (VI) \cite{c8}-\cite{c10}, \cite{c30}, \cite{c35}. These two classes of methods offer complementary strengths and weaknesses. PI provides fast convergence to an optimal control policy, while it requires an initial stabilizing control policy, which is generally difficult to derive. On the other hand, VI avoids the requirement for an initial stabilizing control policy, at the cost of slower convergence.\\
In recent years, there is an increasing interest in deriving improved variants of PI and VI. A model-based PI algorithm utilizing a limited lookahead policy evaluation approach is presented in \cite{c11}. A class of PI algorithms based on temporal difference learning and the $\lambda$-operator is proposed in \cite{c12}, which has been further extended using abstract dynamic programming \cite{c13} and randomized proximal methods \cite{c14}, \cite{c15}. An alternative family of model-based tabular PI algorithms with multi-step greedy policy improvement is derived in \cite{c16}, \cite{c17}. In all these works, the requirement for an initial stabilizing control policy still holds. An improved model-based backward VI algorithm is presented in \cite{c18}. A data-driven VI formulation using finite horizon policy evaluation similar to the model-based limited lookahead approach in \cite{c11} is presented in \cite{c19}, showing good performance on undiscounted regulation of linear-time-invariant (LTI) systems without input constraints.\\ 
Input saturation is an inevitable property of many real systems. To handle input constraints, many research works \cite{c20}-\cite{c24} consider the use of a specific nonquadratic stage cost function, based on which the optimal control policy can be derived in closed-form. However, this approach can be applied to LTI and affine nonlinear systems, but not to nonaffine nonlinear systems. This is because the optimal control policy of a nonaffine nonlinear system is an implicit function of the value function or Q-function, which is generally very challenging to be computed analytically \cite{c30}. Furthermore, the use of the same stage cost function in all constrained optimal control problems is an unrealistic approach, since it cannot reflect the true cost incurred during the operation of a system.\\
The Linear Programming (LP) approach to ADP provides a promising model-based optimization method for the approximation of the optimal value function \cite{c25} or Q-function \cite{c26}, \cite{c27} and control policy, with provable theoretical guarantees on the quality of approximation and algorithmic performance. The authors in \cite{c28}, \cite{c29} have extended the LP approach to model-free optimal regulation problems, by deriving data-driven PI and VI based LP algorithms. However, to the best of our knowledge, the problem of optimal tracking control (with or without input constraints) has never been studied in the current model-based and model-free LP literature, remaining an important open problem.\\    
In this work, we address the constrained optimal tracking problem through a two-step transformation scheme. This converts the initial constrained-input optimal tracking problem to an unconstrained augmented optimal regulation problem, and allows the use of general stage cost functions. Then, we derive a novel multi-step VI algorithm for general discounted constrained-input optimal tracking control problems, based on Q-learning and linear programming. The proposed algorithm improves the convergence speed of VI, avoids the requirement for an initial stabilizing control policy of PI, and provides rigorous theoretical guarantees. Simulation studies on a nonlinear system confirm the effectiveness of the proposed approach.\\   
The structure of the paper is given as follows. The problem formulation is given in Section II. The proposed algorithm is derived and analyzed in Sections III and IV. Simulation studies are carried on Section V and conclusions are given in Section VI.\\
\textbf{Notation.} $\mathbb{N}$ defines the set of natural numbers, while $\mathbb{N}_{0}$ the set of natural numbers including $0$. $\mathbb{R_{+}}$ refers to the set of non-negative real numbers. $I_{n}$ defines an identity matrix of size $n\times n$. $\mathbb{S}_{n}$ defines the set of symmetric matrices of size $n\times n$. Finally, $tr(A)$ denotes the trace of a matrix $A\in \mathbb{R}^{n\times n}$.  
\section{PROBLEM FORMULATION}
In this work, we consider discrete-time deterministic systems in state-space form,
\begin{equation}
x_{k+1} = f(x_k,u_k),
\end{equation}
where $x_k\in \mathcal{X}\subseteq \mathbb{R}^{n}$, $u_k \in \mathcal{W}\subseteq \mathcal{U}\subseteq \mathbb{R}^{m}$ and $f:\mathcal{X}\times \mathcal{W}\rightarrow \mathcal{X}$ denote the state vector, input vector and system dynamics at time $k\in \mathbb{N}_{0}$ respectively. We assume that $u_k=\begin{bmatrix} u_{1,k} & u_{2,k}\ldots u_{m,k}\end{bmatrix}^{T}$ is constrained by $|u_{j,k}|\leq \bar{u}_{j}$, with $\bar{u}_{j}>0$ for all $j$. Finally, $r_k\in \mathcal{X}$ is a desired bounded reference signal generated by an \enquote{exosystem}
\begin{equation}
r_{k+1} = g(r_k),
\end{equation}
where $g:\mathcal{X}\rightarrow \mathcal{X}$ is the reference generation function.
\subsection{A two-step transformation scheme}
We now derive a two-step transformation that converts the initial problem to an unconstrained augmented optimal regulation problem. The first step consists of defining the tracking error and augmented state variables as $e_k = x_k-r_k$ and $z_k = \begin{bmatrix} e^{T}_{k} & r^{T}_{k}\end{bmatrix}^{T}\in \mathcal{Z}\subseteq \mathbb{R}^{2n}$ respectively. Thus, we derive an augmented system given by
\begin{equation}
z_{k+1} = F(z_{k},u_{k}) = \begin{bmatrix} f(e_{k}+r_{k},u_{k})-g(r_{k}) \\ g(r_{k}) \end{bmatrix}.
\end{equation}
For the second step, following \cite{c30}, we assume that the constrained-input state-feedback control policy $u$ can be defined as
\begin{equation}
u = s\big(\mu(z)\big),
\end{equation} 
where $\mu(z):\mathcal{Z}\rightarrow \mathcal{U}$ is an unconstrained state-feedback control policy and $s:\mathcal{U}\rightarrow \mathcal{W}$ is some form of saturation function \cite{c20}-\cite{c24}. The main objective then becomes the computation of an unconstrained state-feedback control policy $\mu(z)$ that minimizes a general cost function,
\begin{equation}
\mathcal{J}^{\mu}(z_0)=\sum_{k=0}^{\infty}\gamma^{k}L\big(z_k,\mu(z_k)\big),
\end{equation}
where $\gamma \in (0,1]$ is the discount factor, $L\big(z,\mu(z)\big) \overset{\Delta}{=} l\big(z,s\big(\mu(z)\big)\big),$ and $l:\mathcal{Z}\times \mathcal{W}\rightarrow \mathbb{R}_{+}$ is the stage cost function. We note that, as usual in optimal tracking control problems, if $r_k$ is a general bounded signal, a discount factor $\gamma \in (0,1)$ is required for the cost function (5) to be finite \cite{c31}-\cite{c35}. Furthermore, a discount factor $\gamma=1$ is only permitted if $g(\cdot)$ is apriori known to be an asympotically stable reference generation function.\\
To ensure the well posedness of the problem, we assume that there exists a policy $\mu(z)$ such that $\mathcal{J}^{\mu}(z)<\infty$ for all $z\in \mathcal{Z}$. Hereafter, we refer to policies with this property as stabilizing policies. We consider the case where the mathematical expressions of $f$, $g$ and $l$ (and therefore $F$ and $L$) are unknown, although their values can be observed by applying sequences of states $z$ and control policies $u$ through interaction with the system, e.g. simulations and experiments.
\subsection{Q-learning}
A fundamental element in Q-learning is the Q-function,
\begin{align}
Q^{\mu}(z,a) &= L(z,a)+\gamma \mathcal{J}^{\mu}(z_1),
\end{align}
where $a\in \mathcal{U}$ and $z_1=F\big(z,s(a)\big)$. To compute the optimal policy, we firstly need to derive the optimal Q-function, 
\begin{align}
Q^{\star}(z,a) = L(z,a) + \gamma \inf_{\mu}\mathcal{J}^{\mu}(z_1),
\end{align}
that can be shown to be the solution of the Bellman optimality equation \cite{c10},
\begin{equation}
Q^{*}(z,a)=\underbrace{L(z,a)+\gamma \min_{v} Q^{*}\big(z_1,v\big)}_{\mathcal{D}Q^{*}(z,a)}.
\end{equation} 
Here, $\mathcal{D}$ is the Q-function based Bellman operator that is both contractive and monotone \cite{c26}, \cite{c27}, ensuring the existence and uniqueness of the fixed point $Q^{\star}$ in (8). The unconstrained optimal policy can then be derived as
\begin{equation}
\mu^{*}(z) = \underset{v}{\mathrm{argmin }}Q^{*}(z,v).
\end{equation}
Once $\mu^{\star}$ has been found, the constrained optimal state-feedback controller can be easily computed based on (4).
\section{MULTI-STEP VI FOR CONSTRAINED-INPUT OPTIMAL TRACKING CONTROL}
In this section, we derive a Q-learning based multi-step VI algorithm for effective constrained-input optimal tracking control of unknown systems. Starting from an initial Q-function $Q^{0}(z,a)$ and iterating through index $i\geq 0$, it proceeds with the following till convergence of the Q-function:
\begin{enumerate}
\item[i)] \textbf{Policy Improvement:}
\begin{align} \mu^{i}(z)=\underset{v}{\mathrm{argmin}}Q^{i}(z,v). \end{align}
\item[ii)] \textbf{Policy Evaluation:} Solve for $Q^{i+1}$,
\begin{align}
Q^{i+1}(z,a)=& L(z,a)+\sum_{l=1}^{H_i-1}\gamma^{l}L\big(z_l,\mu^{i}(z_l)\big)\nonumber \\
&+\gamma^{H_i} Q^{i}\big(z_{H_i},\mu^{i}(z_{H_i})\big),
\end{align}
where $z_{1}$$=F\big(z,s(a)\big)$ and $z_{l}=F\big(z_{l-1},s\big(\mu^{i}(z_{l-1})\big)\big)$ for $l>1$.
\end{enumerate}
Compared to conventional PI and VI algorithms, the proposed policy evaluation scheme is different. At each iteration $i$, the policy evaluation of PI and VI uses single-step information based on the current and next time instants \cite{c8}-\cite{c10}, \cite{c30}, \cite{c35}. However, the policy evaluation (11) utilizes finite horizon (i.e. \enquote{multi-step}) data characterized by the horizon length $H_i\geq 1$; by setting $H_i=1$ for all $i$, the proposed algorithm is converted to the standard VI algorithm. Therefore, starting from the conventional VI algorithm and based on the idea of exploiting more data in the policy evaluation scheme at each iteration, the proposed multi-step VI algorithm aims to boost the convergence speed of VI, without requiring an initial stabilizing policy (as in PI).\\
The following theorem provides important monotonicity and convergence guarantees of the proposed algorithm.\\ 
\begin{theorem}
Consider the sequence $\{Q^{i}(z,a)\}$ generated by (10) and (11). If
\begin{equation}
Q^{0}(z,a)\geq L(z,a)+\gamma \min_{v}Q^{0}(z_1,v),
\end{equation}
where $z_1=F\big(z,s(a)\big)$, then
\begin{equation}
\begin{aligned}
Q^{i+1}(z,a)\leq L(z,a)+\gamma \min_{v}Q^{i}(z_1,v) \leq Q^{i}(z,a), \text{ for all $i$}.
\end{aligned}
\end{equation}
Furthemore,
\begin{equation}
\lim_{i\rightarrow \infty}Q^{i}(z,a)=Q^{\star}(z,a).
\end{equation}
\end{theorem}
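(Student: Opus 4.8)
The plan is to recast the multi-step policy evaluation (11) as an iterated application of a single-step policy-evaluation operator, and then to exploit the monotonicity and contraction of $\mathcal{D}$. For a fixed policy $\mu$, introduce the operator $\mathcal{D}_{\mu}Q(z,a)=L(z,a)+\gamma Q\big(z_1,\mu(z_1)\big)$ with $z_1=F\big(z,s(a)\big)$; it is monotone in $Q$, and for every $\mu$ and $Q$ one has $\mathcal{D}_{\mu}Q\geq \mathcal{D}Q$, with equality whenever $\mu(z)=\mathrm{argmin}_v Q(z,v)$. Unrolling the recursion in (11) shows $Q^{i+1}=\mathcal{D}_{\mu^i}^{H_i}Q^i$ (the $H_i$-fold composition), and the greedy choice (10) yields the identity $\mathcal{D}_{\mu^i}Q^i=\mathcal{D}Q^i$. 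A simple but pivotal fact I will reuse repeatedly: if $\mathcal{D}_{\mu}Q\leq Q$ then, by monotonicity, the sequence $k\mapsto\mathcal{D}_{\mu}^k Q$ is nonincreasing, so $\mathcal{D}_{\mu}^{H}Q\leq \mathcal{D}_{\mu}Q$ for every $H\geq 1$; the analogous statement with the inequalities reversed also holds.

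I would prove (14) by induction on $i$. The base case $i=0$ is immediate: (13) is exactly $\mathcal{D}Q^0\leq Q^0$, and since $\mathcal{D}_{\mu^0}Q^0=\mathcal{D}Q^0\leq Q^0$, the fact above gives $Q^1=\mathcal{D}_{\mu^0}^{H_0}Q^0\leq \mathcal{D}_{\mu^0}Q^0=\mathcal{D}Q^0\leq Q^0$. For the inductive step, assume $Q^{i+1}\leq \mathcal{D}Q^i\leq Q^i$. From $\mathcal{D}_{\mu^i}Q^i=\mathcal{D}Q^i\leq Q^i$, the sequence $k\mapsto\mathcal{D}_{\mu^i}^k Q^i$ is nonincreasing, hence $\mathcal{D}_{\mu^i}Q^{i+1}=\mathcal{D}_{\mu^i}^{H_i+1}Q^i\leq \mathcal{D}_{\mu^i}^{H_i}Q^i=Q^{i+1}$, and therefore $\mathcal{D}Q^{i+1}\leq \mathcal{D}_{\mu^i}Q^{i+1}\leq Q^{i+1}$. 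Using now $\mathcal{D}_{\mu^{i+1}}Q^{i+1}=\mathcal{D}Q^{i+1}\leq Q^{i+1}$ together with the same fact, $Q^{i+2}=\mathcal{D}_{\mu^{i+1}}^{H_{i+1}}Q^{i+1}\leq \mathcal{D}_{\mu^{i+1}}Q^{i+1}=\mathcal{D}Q^{i+1}$. This closes the induction and establishes (14).

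For (15), I would squeeze $Q^i$ between two sequences converging to $Q^{\star}$. Iterating the right-hand inequality of (14) with the monotonicity of $\mathcal{D}$ yields the upper bound $Q^i\leq \mathcal{D}^i Q^0$, and since $\mathcal{D}$ is a contraction with fixed point $Q^{\star}$, $\mathcal{D}^i Q^0\to Q^{\star}$. For the matching lower bound, I would show $Q^i\geq Q^{\star}$ for all $i$ by induction: $Q^0\geq Q^{\star}$ follows from (13), because (13) implies $Q^0\geq \mathcal{D}^n Q^0$ for all $n$ while $\mathcal{D}^n Q^0\to Q^{\star}$; and if $Q^i\geq Q^{\star}$, then $\mathcal{D}_{\mu^i}Q^i=\mathcal{D}Q^i\geq \mathcal{D}Q^{\star}=Q^{\star}$, and since $\mathcal{D}_{\mu^i}Q^{\star}\geq \mathcal{D}Q^{\star}=Q^{\star}$, repeated application of the monotone operator $\mathcal{D}_{\mu^i}$ keeps the iterates above $Q^{\star}$, so $Q^{i+1}=\mathcal{D}_{\mu^i}^{H_i}Q^i\geq Q^{\star}$. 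Combining the two bounds, $Q^{\star}\leq Q^i\leq \mathcal{D}^i Q^0$ for every $i$, and the squeeze theorem gives $Q^i(z,a)\to Q^{\star}(z,a)$ for all $(z,a)$.

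I expect the main obstacle to be purely organizational rather than analytical: one must correctly identify the multi-step update as a power of $\mathcal{D}_{\mu^i}$, use the policy-improvement identity $\mathcal{D}_{\mu^i}Q^i=\mathcal{D}Q^i$, and then thread the monotone inequalities both \enquote{horizontally} across the $H_i$ inner stages and \enquote{vertically} across the outer index $i$, taking care that each invocation of monotonicity is applied to an inequality that has actually been established at that point. It is worth flagging explicitly that $H_i\geq 1$ is precisely what validates $\mathcal{D}_{\mu^i}^{H_i}Q^i\leq \mathcal{D}_{\mu^i}Q^i$ (and its reverse), which is why the middle term $\mathcal{D}Q^i$ can be inserted into (14). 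The convergence claim then rests entirely on the contraction property of $\mathcal{D}$ already invoked after (8), so no new analytical input is needed.
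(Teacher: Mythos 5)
Your proof is correct. For the monotonicity claim your argument is essentially the paper's own, repackaged in operator notation: the paper writes out the $H_i$-fold telescoping of the multi-step evaluation explicitly and peels off one stage at a time using (10) and the inductive hypothesis, which is exactly your observation that $Q^{i+1}=\mathcal{D}_{\mu^i}^{H_i}Q^i$ together with the fact that $\mathcal{D}_{\mu^i}Q^i=\mathcal{D}Q^i\leq Q^i$ makes $k\mapsto\mathcal{D}_{\mu^i}^{k}Q^i$ nonincreasing; your formulation is cleaner but not a different idea. Where you genuinely diverge is the convergence part. The paper argues that the nonincreasing, nonnegative sequence $\{Q^i\}$ has a pointwise limit $Q^{\infty}$, passes to the limit in the sandwich inequality to conclude $Q^{\infty}=\mathcal{D}Q^{\infty}$, and then invokes uniqueness of the fixed point. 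You instead squeeze $Q^{\star}\leq Q^i\leq \mathcal{D}^i Q^0$ and use the contraction of $\mathcal{D}$ to drive the upper envelope to $Q^{\star}$. Both routes ultimately rest on the contraction property (the paper for uniqueness, you for convergence of $\mathcal{D}^iQ^0$), but yours sidesteps the interchange of the limit with $\min_v$ that the paper's limit-passing step quietly requires (harmless here since infima commute with monotone limits, but unstated), and it yields the extra fact $Q^i\geq Q^{\star}$ for all $i$ explicitly rather than as an afterthought. The paper's version is slightly more economical; yours is slightly more careful.
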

\begin{proof} 
We firstly apply mathematical induction to prove (13). Based on (11) and (12), we get
\begin{align}
Q^{1}(z,a)=& L(z,a)+\sum_{l=1}^{H_0-1}\gamma^{l}L\big(z_l,\mu^{0}(z_l)\big) \nonumber \\
&+\gamma^{H_0} Q^{0}\big(z_{H_0},\mu^{0}(z_{H_0})\big) \nonumber\\
=&L(z,a)+\sum_{l=1}^{H_0-2}\gamma^{l}L\big(z_l,\mu^{0}(z_l)\big) \nonumber\\
&+\gamma^{H_0-1} L\big(z_{H_0-1},\mu^{0}(z_{H_0-1})\big)\nonumber\\
&+\gamma^{H_0} Q^{0}\big(z_{H_0},\mu^{0}(z_{H_0})\big)\nonumber\\
\overset{(10)}{=} &L(z,a)+\sum_{l=1}^{H_0-2}\gamma^{l}L\big(z_l,\mu^{0}(z_l)\big)\nonumber\\
&+\gamma^{H_0-1}\bigg(L\big(z_{H_0-1},\mu^{0}(z_{H_0-1})\big)\nonumber\\
&+ \gamma \min_{v}Q^{0}(z_{H_0},v)\bigg)\nonumber\\
\overset{(12)}{\leq}& L(z,a)+\sum_{l=1}^{H_0-2}\gamma^{l}L\big(z_l,\mu^{0}(z_l)\big)\nonumber\\
&+\gamma^{H_{0}-1}Q^{0}\big(z_{H_{0}-1},\mu^{0}(z_{H_{0}-1})\big)\nonumber.
\end{align}
Iterating leads to
\begin{align}
Q^{1}(z,a) \leq & L(z,a)+\gamma Q^{0}\big(z_{1},\mu^{0}(z_{1})\big) \nonumber\\
\overset{(10)}{=} &L(z,a)+\gamma \min_{v}Q^{0}\big(z_{1},v)\big).
\end{align}
Hence, by (12),
\begin{align}
Q^{1}(z,a)\leq L(z,a)+\gamma \min_{v}Q^{0}(z_{1},v) \leq Q^{0}(z,a).\nonumber
\end{align}
Then, we assume that (13) holds for $i-1$,
\begin{equation}
\begin{aligned}
Q^{i}(z,a)\leq L(z,a)+\gamma \min_{v}Q^{i-1}(z_1,v) \leq Q^{i-1}(z,a).
\end{aligned}
\end{equation}
It follows that
\begin{align}
Q^{i}(z,a)=& L(z,a)+\sum_{l=1}^{H_{i-1}-1}\gamma^{l}L\big(z_l,\mu^{i-1}(z_l)\big) \nonumber \\
&+\gamma^{H_{i-1}} Q^{i-1}\big(z_{H_{i-1}},\mu^{i-1}(z_{H_{i-1}})\big) \nonumber \\
\overset{(16)}{\geq} & L(z,a)+\sum_{l=1}^{H_{i-1}-1}\gamma^{l}L\big(z_l,\mu^{i-1}(z_l)\big) \nonumber \\
&+\gamma^{H_{i-1}}\bigg(L\big(z_{H_{i-1}},\mu^{i-1}(z_{H_{i-1}})\big)\nonumber\\
&+\gamma \min_{v}Q^{i-1}(z_{H_{i-1}+1},v)\bigg)\nonumber \\
\overset{(10)}{=}&L(z,a)+\sum_{l=1}^{H_{i-1}}\gamma^{l}L\big(z_l,\mu^{i-1}(z_l)\big) \nonumber \\
&+\gamma^{H_{i-1}+1} Q^{i-1}\big(z_{H_{i-1}+1},\mu^{i-1}(z_{H_{i-1}+1})\big)\nonumber \\
=&L(z,a)+\gamma\bigg(\sum_{l=1}^{H_{i-1}}\gamma^{l-1}L\big(z_l,\mu^{i-1}(z_l)\big) \nonumber \\
&+\gamma^{H_{i-1}} Q^{i-1}\big(z_{H_{i-1}+1},\mu^{i-1}(z_{H_{i-1}+1})\big)\bigg)\nonumber \\
\overset{(11)}{=} &L(z,a)+\gamma Q^{i}\big(z_{1},\mu^{i-1}(z_{1})\big)\nonumber \\
\geq & L(z,a)+\gamma \min_{v}Q^{i}(z_{1},v).
\end{align}
The reasoning leading up to (15) then gives,
\begin{align}
Q^{i+1}(z,a)=& L(z,a)+\sum_{l=1}^{H_i-1}\gamma^{l}L\big(z_l,\mu^{i}(z_l)\big) \nonumber \\
&+\gamma^{H_i} Q^{i}\big(z_{H_i},\mu^{i}(z_{H_i})\big) \nonumber\\
\leq & L(z,a)+\sum_{l=1}^{H_i-2}\gamma^{l}L\big(z_l,\mu^{i}(z_l)\big)\nonumber\\
&+\gamma^{H_{i}-1}Q^{i}\big(z_{H_{i}-1},\mu^{i}(z_{H_{i}-1})\big)\nonumber\\
\leq & L(z,a)+\gamma Q^{i}\big(z_{1},\mu^{i}(z_{1})\big) \nonumber\\
\overset{(10)}{=}& L(z,a)+\gamma \min_{v}Q^{i}(z_{1},v).
\end{align}
Based on (17) and (18), we finally conclude that (13) holds for all $i$.
\\For the second part of the proof, based on (13), the sequence $\{Q^{i}(z,a)\}$ is non-increasing and, since $L$ is assumed to be non-negative, lower bounded by $0$ for all $i$; hence it has a point-wise limit $Q^{\infty}(z,a) = \lim_{i\rightarrow \infty} Q^{i}(z,a)$ for all $(z,a)\in \mathcal{Z}\times \mathcal{U}$. Furthermore, let $\mu^{\infty}(z) = \underset{v}{\mathrm{argmin }}Q^{\infty}(z,v)$. Then, taking the limit of (13), we have that
\begin{align}
Q^{\infty}(z,a) \leq L(z,a)+\gamma \min_{v}Q^{\infty}(z_1,v) \leq Q^{\infty}(z,a),\nonumber
\end{align}
i.e. 
\begin{align}
Q^{\infty}(z,a)&= L(z,a)+\gamma \min_{v}Q^{\infty}(z_1,v).
\end{align}
We note that (19) is the Bellman optimality equation (8), which is satisfied by $Q^{\infty}$. Therefore, $Q^{\infty}(z,a)=Q^{\star}(z,a)$.
\end{proof}
\begin{crl}
Let $\{Q^{i}(z,a)\}$ generated by (10) and (11), and $\{Q^{i}_{VI}(z,a)\}$ the sequence generated by the standard VI algorithm \big(corresponding to setting $H_i=1$ for all $i$ in (11)\big). Under assumption (12), if $Q^{0}(z,a)\leq Q^{0}_{VI}(z,a)$, then $Q^{i}(z,a)\leq Q^{i}_{VI}(z,a)$ for all $i$.
\end{crl}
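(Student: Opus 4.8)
The plan is to argue by induction on $i$, leaning on the left-hand inequality already established in Theorem~1. The base case $i=0$ is exactly the hypothesis $Q^{0}(z,a)\leq Q^{0}_{VI}(z,a)$. For the inductive step, suppose $Q^{i}(z,a)\leq Q^{i}_{VI}(z,a)$ for all $(z,a)\in\mathcal{Z}\times\mathcal{U}$; I want to propagate this bound to index $i+1$.

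First I would invoke Theorem~1: since $Q^{0}$ satisfies assumption (12), inequality (13) holds for all $i$, and in particular its first part gives $Q^{i+1}(z,a)\leq L(z,a)+\gamma\min_{v}Q^{i}(z_{1},v)$ with $z_{1}=F\big(z,s(a)\big)$. Next, I would use the elementary monotonicity of the pointwise minimum: from $Q^{i}\leq Q^{i}_{VI}$ it follows that $\min_{v}Q^{i}(z_{1},v)\leq\min_{v}Q^{i}_{VI}(z_{1},v)$ for every $z_{1}$. Finally, recognizing that $L(z,a)+\gamma\min_{v}Q^{i}_{VI}(z_{1},v)$ is, by definition, $Q^{i+1}_{VI}(z,a)$ (the update (11) specialized to $H_{i}=1$), I would chain the two inequalities to obtain $Q^{i+1}(z,a)\leq Q^{i+1}_{VI}(z,a)$, which closes the induction.

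There is essentially no hard step here; the result is a direct consequence of Theorem~1 together with the monotonicity of the Bellman-type operator $Q\mapsto L(\cdot,\cdot)+\gamma\min_{v}Q(\cdot,v)$. The only point requiring a little care is to ensure that assumption (12) is in force, since it is exactly what makes Theorem~1 applicable and hence what makes the key bound $Q^{i+1}\leq L+\gamma\min_{v}Q^{i}$ available for the multi-step sequence; the corollary's hypothesis supplies this. Note that no property of the standard VI sequence $\{Q^{i}_{VI}(z,a)\}$ beyond its defining recursion is needed.
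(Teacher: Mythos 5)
Your proposal is correct and follows essentially the same route as the paper: induction on $i$, using the first inequality of (13) from Theorem~1, the monotonicity of the pointwise minimum under the inductive hypothesis, and the identification of $L(z,a)+\gamma\min_{v}Q^{i}_{VI}(z_{1},v)$ with $Q^{i+1}_{VI}(z,a)$. No differences worth noting.
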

\begin{proof}
We apply mathematical induction to prove the desired statement. It is clear that the statement holds for $i=0$. We then assume that $Q^{i}(z,a)\leq Q^{i}_{VI}(z,a)$ holds for $i$. Then, based on (13), we have that
\begin{align}
Q^{i+1}(z,a)&\leq L(z,a)+\gamma \min_{v}Q^{i}(z_1,v) \nonumber\\
&\leq L(z,a)+\gamma \min_{v}Q^{i}_{VI}(z_1,v) \nonumber\\
&= Q^{i+1}_{VI}(z,a).
\end{align}
\end{proof}
In other words, the proposed multi-step VI algorithm provides a better Q-function estimate than conventional VI at each iteration $i$.\\
\begin{crl}
Consider a Q-function $\bar{Q}(z,a)$ which satisfies condition (13), and its associated control policy $\mu_{\bar{Q}}(z)$ which is computed based on (10). Let $h_1,h_2\geq 1$ be two distinct horizon lengths, which formulate the following realizations of the policy evaluation step (11),
\begin{align}
Q_{h_1}(z,a) =& L(z,a)+\sum_{l=1}^{h_1-1}\gamma^{l}L\big(z_l,\mu_{\bar{Q}}(z_l)\big) \nonumber \\
&+\gamma^{h_1} \bar{Q}\big(z_{h_1},\mu_{\bar{Q}}(z_{h_1})\big), \nonumber\\
Q_{h_2}(z,a) =& L(z,a)+\sum_{l=1}^{h_2-1}\gamma^{l}L\big(z_l,\mu_{\bar{Q}}(z_l)\big) \nonumber \\
&+\gamma^{h_2} \bar{Q}\big(z_{h_2},\mu_{\bar{Q}}(z_{h_2})\big)\nonumber.
\end{align}
If $h_1\geq h_2$, then $Q_{h_1}(z,a)\leq Q_{h_2}(z,a)$.
\end{crl}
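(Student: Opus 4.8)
The plan is to reduce everything to a single one-step comparison: I will show that $Q_{h+1}(z,a)\le Q_{h}(z,a)$ for every integer $h\ge 1$, and then the statement for $h_1\ge h_2$ follows by applying this inequality successively for $h=h_2,h_2+1,\dots,h_1-1$. The crucial structural remark is that the trajectory $z_1=F(z,s(a))$, $z_{l}=F\big(z_{l-1},s(\mu_{\bar{Q}}(z_{l-1}))\big)$ underlying the two realizations $Q_{h_1}$ and $Q_{h_2}$ is one and the same, so their partial sums coincide term by term and only the tails need to be compared.

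Concretely, I would subtract the two defining expressions for consecutive horizons, which telescopes neatly: for any $h\ge 1$,
\[
Q_{h+1}(z,a)-Q_{h}(z,a)=\gamma^{h}\Big( L\big(z_h,\mu_{\bar{Q}}(z_h)\big)+\gamma\,\bar{Q}\big(z_{h+1},\mu_{\bar{Q}}(z_{h+1})\big)-\bar{Q}\big(z_h,\mu_{\bar{Q}}(z_h)\big)\Big),
\]
using that the sum in $Q_{h+1}$ has exactly one more term than the one in $Q_{h}$, namely $\gamma^{h}L(z_h,\mu_{\bar{Q}}(z_h))$. The hypothesis that $\bar{Q}$ satisfies (13) gives in particular $L(\zeta,\alpha)+\gamma\min_{v}\bar{Q}\big(F(\zeta,s(\alpha)),v\big)\le \bar{Q}(\zeta,\alpha)$ for all $(\zeta,\alpha)\in\mathcal{Z}\times\mathcal{U}$. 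Evaluating this at $\zeta=z_h$, $\alpha=\mu_{\bar{Q}}(z_h)$, and using $F\big(z_h,s(\mu_{\bar{Q}}(z_h))\big)=z_{h+1}$ together with $\min_v\bar{Q}(z_{h+1},v)=\bar{Q}\big(z_{h+1},\mu_{\bar{Q}}(z_{h+1})\big)$ — which is exactly the definition (10) of $\mu_{\bar{Q}}$ — shows that the bracketed term is nonpositive. Hence $Q_{h+1}(z,a)\le Q_{h}(z,a)$, and chaining yields $Q_{h_1}(z,a)\le Q_{h_1-1}(z,a)\le\cdots\le Q_{h_2}(z,a)$.

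I do not anticipate a genuine obstacle; the argument is a short telescoping estimate, structurally the same as the manipulations used to derive (15)--(18) in the proof of Theorem 1. The only points that require care are bookkeeping ones: writing the difference $Q_{h+1}-Q_{h}$ with the correct powers of $\gamma$, and being explicit that the minimiser appearing in condition (13) coincides with the policy $\mu_{\bar{Q}}$ used to roll out the trajectory, so that the one-step inequality coming from (13) can be inserted exactly where the tail of $Q_{h+1}$ sits.
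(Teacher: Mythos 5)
Your proposal is correct and is essentially the paper's own argument in a slightly different packaging: the paper peels one step off the tail of $Q_{h_1}$ at a time using (10) and the inequality $L(z,a)+\gamma\min_{v}\bar{Q}(z_1,v)\leq\bar{Q}(z,a)$ from (13), and iterates down to horizon $h_2$, which is exactly your one-step comparison $Q_{h+1}\leq Q_{h}$ chained from $h_2$ up to $h_1$. The telescoping difference you compute and the identification of the minimizer with $\mu_{\bar{Q}}$ are precisely the steps marked $\overset{(10)}{=}$ and $\overset{(13)}{\leq}$ in the paper's proof, so no gap remains.
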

\begin{proof}
Similar to the reasoning for the proof of (15), we get that
\begin{align}
Q_{h_1}(z,a)=& L(z,a)+\sum_{l=1}^{h_1-1}\gamma^{l}L\big(z_l,\mu_{\bar{Q}}(z_l)\big) \nonumber \\
&+\gamma^{h_1} \bar{Q}\big(z_{h_1},\mu_{\bar{Q}}(z_{h_1})\big) \nonumber\\
\overset{(10)}{=} & L(z,a)+\sum_{l=1}^{h_1-2}\gamma^{l}L\big(z_l,\mu_{\bar{Q}}(z_l)\big)\nonumber\\
&+\gamma^{h_1-1}\bigg(L\big(z_{h_1-1},\mu_{\bar{Q}}(z_{h_1-1})\big)\nonumber\\
&+ \gamma \min_{v}\bar{Q}(z_{h_1},v)\bigg)\nonumber\\
\overset{(13)}{\leq} & L(z,a)+\sum_{l=1}^{h_1-2}\gamma^{l}L\big(z_l,\mu_{\bar{Q}}(z_l)\big)\nonumber\\
&+\gamma^{h_1-1}\bar{Q}\big(z_{h_1-1},\mu_{\bar{Q}}(z_{h_1-1})\big).\nonumber
\end{align}
Iterating leads to
\begin{align}
Q_{h_1}(z,a) \leq & L(z,a)+\sum_{l=1}^{h_2-1}\gamma^{l}L\big(z_l,\mu_{\bar{Q}}(z_l)\big)\nonumber\\
&+\gamma^{h_2}\bar{Q}\big(z_{h_2},\mu_{\bar{Q}}(z_{h_2})\big)\nonumber\\
= &Q_{h_2}(z,a).
\end{align}  
\end{proof}
\textbf{Remark 1.} We note that the initialization condition for $Q^{0}(z,a)$ (12) is a sufficient condition. Since $H_i\geq 1$ for all $i$, we can set $H_0=1$, transforming the proposed multi-step algorithm to the standard VI algorithm. Due to the fact that the sequence $\{Q^{i}(z,a)\}$ is non-increasing based on Theorem 1, we can take advantage of the monotonicity properties of VI \cite{c8}-\cite{c10}, \cite{c30} and simply initialize $Q^{0}(z,a)$ with an arbitrary, sufficiently large positive definite function. Then, based on Corollary $2$, we can increase $H_i$ for $i>0$ to boost the convergence speed. Therefore, the initialization condition (12) can be relaxed, while the monotonicity and convergence guarantees of Theorem 1 can still hold. The derivation of such theoretical relaxations will be investigated in future work. 
\section{A DATA-DRIVEN, MULTI-STEP \\LP FORMULATION}
In this section, we proceed with the reformulation of the policy evaluation scheme (11) as a data-driven optimization problem. We start by relaxing the Bellman equation (8) to an inequality \cite{c26}-\cite{c29},
\begin{align}
Q(z,a)&\leq \mathcal{D}Q(z,a),\enspace \forall (z,a)\in \mathcal{Z}\times \mathcal{U} \nonumber \\
&\leq L(z,a)+\gamma Q(z_1,v) \nonumber \\
&\forall (z,a,v)\in \mathcal{Z}\times \mathcal{U}^{2}
\end{align}
where $z_1=F\big(z,s(a)\big)$, and reformulate (8) as an equivalent linear program,
\begin{equation}
\begin{aligned}
& \underset{Q\in \mathcal{F}(\mathcal{Z},\mathcal{U})}{\text{max}} 
& & \int_{\mathcal{Z}\times \mathcal{U}}Q(z,a)c(dz,da)\\
& \text{s.t.}
& &Q(z,a) \leq L(z,a)+ \gamma Q\big(z_1,v\big)\\
& & &\forall (z,a,v)\in \mathcal{Z}\times \mathcal{U}^{2},\\
\end{aligned}
\end{equation} 
where $\mathcal{F}(\mathcal{Z},\mathcal{U})$ is the space of measurable, bounded (in a suitable weighted norm) Q-functions. The state-control policy relevance weight $c(\cdot,\cdot)$ is a probability measure that allocates positive mass to all open subsets of $\mathcal{Z}\times \mathcal{U}$ \cite{c27}.\\

\textit{Lemma 1 \cite{c26}-\cite{c29}:} If $Q^{\star}\in \mathcal{F}(\mathcal{Z},\mathcal{U})$, then the optimizer of (23) is the same as the solution of (8), for $c$ almost all $(z,a)\in \mathcal{Z}\times \mathcal{U}$.\\
In general, computing the optimizer of (23) is intractable, due to the curse of dimensionality. The main challenges are described in \cite{c26}-\cite{c29}:
\begin{enumerate}
\item[$i)$] $\mathcal{F}(\mathcal{Z},\mathcal{U})$ is a high-dimensional (if $\mathcal{Z}, \mathcal{U}$ are finite) or an infinite dimensional (if $\mathcal{Z}, \mathcal{U}$ are not finite) space.
\item[$ii)$] The optimization problem (23) involves a large or infinite number of inequality constraints.
\item[$iii)$] Due to the fact that $Q^{\star}\in \mathcal{F}(\mathcal{Z},\mathcal{U})$, the policy improvement (9) is generally intractable.
\end{enumerate}
To deal with $i)$, we consider a restricted function space $\hat{\mathcal{F}}(\mathcal{Z}\times \mathcal{U})$ spanned by a finite number of basis functions $\hat{Q}_j(z,a)$ for $j=1,\ldots,K$,
\begin{equation*}
\hat{\mathcal{F}}(\mathcal{Z}\times \mathcal{U})=\{Q(\cdot,\cdot)|Q(z,a) =\alpha^{T}\hat{Q}(z,a)\},
\end{equation*}
where $\alpha \in \mathbb{R}^{K}$ and $\hat{Q}(z,a):\mathcal{Z}\times \mathcal{U}\rightarrow \mathbb{R}^{K}$. An approximate solution to (8) can then be derived by solving the following linear program,
\begin{equation}
\begin{aligned}
& \underset{Q\in \hat{\mathcal{F}}(\mathcal{Z}\times \mathcal{U})}{\text{max}} 
& & \int_{\mathcal{Z}\times \mathcal{U}}Q(z,a)c(dz,da)\\
& \text{s.t.}
& &Q(z,a) \leq L(z,a)+ \gamma Q\big(z_1,v\big)\\
& & &\forall (z,a,v)\in \mathcal{Z}\times \mathcal{U}^{2}.\\
\end{aligned}
\end{equation}
The main challenge of course is the choice of an appropriate function space $\hat{\mathcal{F}}(\mathcal{Z}\times \mathcal{U})$. In order to deal with $ii)$, if the functional forms of $f$, $g$ and $l$ (and therefore $F$ and $L$) are unknown, we can proceed with the relaxation of the inequality constraints in (24) based on sampling methods \cite{c28}, \cite{c29}. In particular, we can construct a buffer of data samples $\{z_b,a_b,L(z_b,a_b),z_{1,b},v_b\}_{b=1}^{B}$ through simulations and experiments with the system, where $B\in \mathbb{N}$ is the size of the buffer. Therefore, based on the constructed buffer, we can replace the inequality constraints in (24) by their sampled variants. Finally, in order to deal with $iii)$, we can restrict $\hat{\mathcal{F}}(\mathcal{Z}\times \mathcal{U})$ to a family of basis functions which are convex in $a$.\\
\textbf{Remark 2.} The quality of solution of (24) generally depends on the specific realization of $c(\cdot,\cdot)$ \cite{c26}-\cite{c29}. However, based on Lemma 1, if $Q^{\star}\in \hat{\mathcal{F}}(\mathcal{Z}\times \mathcal{U})$, then the solution of (24) is $Q^{\star}$, provided that $c(\cdot,\cdot)$ allocates positive mass to all open subsets of $\mathcal{Z}\times \mathcal{U}$.\\
Based on the LP reformulation of the Bellman equation (8) discussed above, we can similarly reformulate the policy evaluation (11) as a data-driven linear program which is given by (25). Furthermore, based on Remark 2, the resulting linear program inherits all the properties of the multi-step algorithm which is developed and analyzed in Section III. Algorithm 1 shows the final algorithm, which we call MSQ-VI-LP. \\
\textbf{Remark 3.} An interesting feature of the proposed algorithm is that, while the use of a horizon length $H_i$ implies more data is available, the number of decision variables and inequality constraints in (25) does not depend on $H_i$, but just on the buffer size $B$ and richness of the function class $\hat{\mathcal{F}}(\mathcal{Z}\times \mathcal{U})$. All data in the buffer does get used, without increasing the complexity of the linear program.  
\begin{algorithm}
        \caption{MSQ-VI-LP algorithm.}\label{euclid}
        \begin{algorithmic}[1]
            \State Select $\epsilon >0$ and $B \in \mathbb{N}$.
            \State Construct $\{z_b,a_b,L(z_b,a_b)\}_{b=1}^{B}$.
            \State Choose $Q^0(z,a)$, based on Remark 1.
             \State Set $i=0$.
              \State Select $H_i\geq 1$.
               \State $\mu^{i}(z)=\underset{v}{\mathrm{argmin}}Q^{i}(z,v).$
              \State Collect $\big\{[z_{l,b}]_{l=1}^{H_i},[L(z_{l,b},\mu^{i}(z_{l,b})]_{l=1}^{H_i-1},\mu^{i}(z_{H_i,b})\big\}_{b=1}^{B}$ through interactions with the system, where $z_{1,b}=F\big(z_b,s(a_b)\big)$ and $z_{l,b}=F\big(z_{l-1,b},s\big(\mu^{i}(z_{l-1,b})\big)\big)$ for $l>1$. 
             \State Solve the LP problem,
\begin{align}
& \underset{Q^{i+1}\in \hat{\mathcal{F}}(\mathcal{Z}\times \mathcal{U})}{\text{max}} 
& & \int_{\mathcal{Z}\times \mathcal{U}}Q^{i+1}(z,a)c(dz,da)\nonumber\\
& \text{s.t.}
& &Q^{i+1}(z_b,a_b) \leq L(z_b,a_b)\nonumber\\
& & & +\sum_{l=1}^{H_i-1}\gamma^{l}L\big(z_{l,b},\mu^{i}(z_{l,b})\big)\nonumber \\
& & &+\gamma^{H_i} Q^{i}\big(z_{H_i,b},\mu^{i}(z_{H_i,b})\big)\nonumber \\
& & & \text{for } b=1,\ldots,B.
\end{align}
\State If $\underset{b}{\mathrm{max}} |Q^{i+1}(z_b,a_b)-Q^{i}(z_b,a_b)|> \epsilon$, set $i=i+1$ and go to Step $5$.
\State Substitute $\mu^{i}$ into (4) and return the approximate constrained optimal controller.
        \end{algorithmic}
    \end{algorithm}
\\In model-free optimal adaptive control problems, the satisfaction of the persistance of excitation (PoE) condition \cite{c36} is required to achieve optimal parameter convergence. Based on \cite{c19}, \cite{c29}, \cite{c36}, an effective way to satisfy PoE in the proposed algorithm is to make a sufficiently rich set of state-control policy pairs $\{z_{b},a_{b}\}_{b=1}^{B}$. These pairs will initialize the policy evaluation of MSQ-VI-LP at each iteration $i$ and provide in total $B$ inequality constraints based on which the LP problem (25) can be solved. The richness of the pairs can be achieved in many ways, e.g. by randomly sampling $(z,a)$ based on suitable probability distributions. The choice of buffer size $B$ is application dependent, and a challenging system will generally require a large buffer. 
\section{SIMULATION STUDIES}
\subsection{Main Configuration}
Consider the two-dimensional nonlinear system from \cite{c37} with a minor modification,
\begin{equation*}
x_{k+1} = \begin{bmatrix} (x_{1,k}+x_{2,k}^{2}+u_k)\cos(x_{2,k})\\(2x_{1,k}^{2}+2x_{2,k}+2u_k)\sin(x_{2,k})\end{bmatrix},
\end{equation*}
where $\mathcal{X}= \mathbb{R}^{2}$, $\mathcal{U}=\mathbb{R}$, and $x_0=\begin{bmatrix} 0.8 \\ -1.1 \end{bmatrix}$. Let the desired reference signal $r_k$ be generated by the sine wave reference generator,
\begin{equation*}
r_{k+1} = \begin{bmatrix} 0.9751r_{1,k}+0.0992r_{2,k} \\ -0.4958r_{1,k}+0.9751r_{2,k} \end{bmatrix},
\end{equation*}
where $r_0 = \begin{bmatrix} 0.5 \\ 0.5 \end{bmatrix}$, making $\mathcal{Z}=\mathbb{R}^{4}$. A discount factor $\gamma=0.95$ is used, and the threshold parameter is set to $\epsilon=10^{-15}$. A quadratic stage cost function,
\begin{equation*}
l(z,u) = e^{T}Ee+u^{T}Fu,
\end{equation*}
is considered, where $E=4\cdot I_{2}$ and $F=1$. We conduct simulation studies on both the unconstrained and constrained-input optimal tracking control of the system. For the unconstrained case, we consider $u$ an unconstrained variable. For the constrained-input case, we impose constraints of the form $|u|\leq 0.7$, and use the hard saturation function,
\begin{equation*}
u = s\big(\mu(z)\big) = \begin{cases}
\mu(z),\text{ if }|\mu(z)|\leq 0.7,\\
0.7,\text{ if }\mu(z) >0.7,\\
-0.7,\text{ otherwise.}
\end{cases} 
\end{equation*}
The following family of Q-functions is employed,
\begin{equation*}
\hat{\mathcal{F}}(\mathcal{Z}\times \mathcal{U}) =\big\{Q(\cdot,\cdot)|Q(z,a)= \begin{bmatrix} e_1 \\e_2\\ r_1 \\r_2\\r_1^{2}\\r_2^{2}\\ a \end{bmatrix}^{T}\hat{P}\begin{bmatrix} e_1 \\e_2\\ r_1 \\r_2\\r_1^{2}\\r_2^{2}\\ a \end{bmatrix}  
\end{equation*} 
where $\hat{P}\in \mathbb{R}^{7\times 7}$. The state-control policy relevance weight $c(\cdot,\cdot)$ is considered a probability measure. The terms $r_{1}^{2}, r_2^{2}$ in the Q-function representation require the first four moments of $c(\cdot,\cdot)$ to appear in the objective function of the linear program (25). Considering the first moment as $\mu_c=0_{5\times 1}$, the objective function of the LP problem (25) reduces to \cite{c26}-\cite{c27}
\begin{equation*}
\int_{\mathcal{Z}\times \mathcal{U}}Q(z,a)c(dz,da) = tr(\hat{P_1}\Sigma_c) +\hat{p_2}^{T}s_c+\hat{p_3}^{T}k_c,
\end{equation*}
where $\hat{P}_1\in \mathbb{R}^{5\times 5}$, $\hat{p}_2\in \mathbb{R}^{20}$ and $\hat{p}_3\in \mathbb{R}^{4}$ are elements of the matrix $\hat{P}$ with second, third and fourth moments given by $\Sigma_c \in \mathbb{S}_5$, $s_c \in \mathbb{R}^{20}$ and $k_c \in \mathbb{R}^{4}$ respectively. Here, we choose $\Sigma_c = I_{5}$, $s_c = 1_{20\times 1}$ and $k_c=1_{4\times 1}$.
\subsection{Comparison with state-of-art data-driven LP algorithms}
We compare the performance of the proposed MSQ-VI-LP algorithm with the state-of-art LP-based PI and VI algorithms in \cite{c29}, which we refer to as Q-PI-LP and Q-VI-LP respectively. These algorithms were originally developed for unconstrained optimal regulation of unknown systems, but can easily be extended to the optimal tracking control setting as in Section II.\\
The simulation study involves the comparison of the following algorithmic variants:
\begin{itemize}
\item The proposed MSQ-VI-LP algorithm, with both an initial arbitrary (i.e. MSQ-VI-LP[A]) and stabilizing control policy (i.e. MSQ-VI-LP[S]),
\item The Q-PI-LP algorithm \cite{c29}, which requires an initial stabilizing control policy, and
\item The Q-VI-LP algorithm \cite{c29}, with both an initial arbitrary (i.e. Q-VI-LP[A]) and stabilizing control policy (i.e. Q-VI-LP[S]).  
\end{itemize}
To satisfy the PoE condition in Q-PI-LP and Q-VI-LP, we utilize the off-policy learning method in \cite{c29} called Randomized Experience Replay (RER). RER proceeds with the construction of a buffer of data samples $\{z_n,a_n,y_n,L_n\}_{n=1}^{N}$, which is repeatedly used for the duration of the algorithms. Here, $N$ is the size of the RER buffer, $(z_n,a_n)$ are arbitrary state-control policy pairs, $y_n=F\big(z_n,s(a_n)\big)$ and $L_n=L(z_n,a_n)$. We set $N=2000$ and construct $(z_n,a_n)_{n=1}^{N}$ with $z_n \sim Uni(-5,5)$ and $a_n \sim Uni(-2,2)$. For the PoE condition in MSQ-VI-LP, we use the same set of pairs $(z,a)$ constructed above, and therefore $B=N=2000$. \\
To initialize Q-PI-LP, we use the stabilizing control policy $\mu^{0}(z) = \begin{bmatrix} -1.5 & 0.5 & 0 & 0&0&0\end{bmatrix}\begin{bmatrix} e \\ r \\ r^{2} \end{bmatrix}$ based on \cite{c37}, which ensures that $\mathcal{J}^{\mu^{0}}(z)<\infty$, for all $z\in \mathcal{Z}$. For MSQ-VI-LP[A] and Q-VI-LP[A], we select an arbitrary, sufficiently large positive definite initial matrix $\hat{P}^{0}$ given by
\begin{equation*}
\begingroup 
\setlength\arraycolsep{0.8pt}
\begin{bmatrix} 34.49 & -1.88 & -0.36 & -9.25 & -6.86 & 11.84 & 3.97\\ -1.88 & 96.46 & 7.25 & 29.08 & -7.05 & -22.61 & -3.71 \\ -0.36 & 7.25 & 21.69 & 5.4 & -18.23 & 1.13 & 4.85 \\ -9.25 & 29.08 & 5.4 & 19.68 & -2.49 & -11.5 & -4.89\\ -6.86 & -7.05 & -18.23 & -2.49 & 39.83 & 1.64 & -13.31 \\ 11.84 & -22.61 & 1.13 & -11.5 & 1.64 & 22.86 & 3.38 \\ 3.97 & -3.71 & 4.85 & -4.89 & -13.31 & 3.38& 0.69 \end{bmatrix}
\endgroup
\end{equation*}    
which leads to a non-stabilizing initial control policy $\mu^{0}$. For MSQ-VI-LP[S] and Q-VI-LP[S], we use the same $\hat{P}^{0}$ as above, although we apply the initial stabilizing control policy of Q-PI-LP. Finally, for the adaptation of $H_i$ in MSQ-VI-LP, we use the tuning control law \cite{c19}
\begin{align}
H_i = 1+\bigg[K\sqrt{i}\bigg],
\end{align}
where $K\geq 0$ and $[\cdot]$ is the rounding of a decimal to the closest integer. The intuition for choosing such a tuning law is discussed in Remark $1$. In our case, we choose $K=5$.  
\begin{figure}[htp]
\begin{centering}$
\begin{array}{cc}
\hbox{\hspace{-0.4em}}\includegraphics[width=0.478\textwidth]{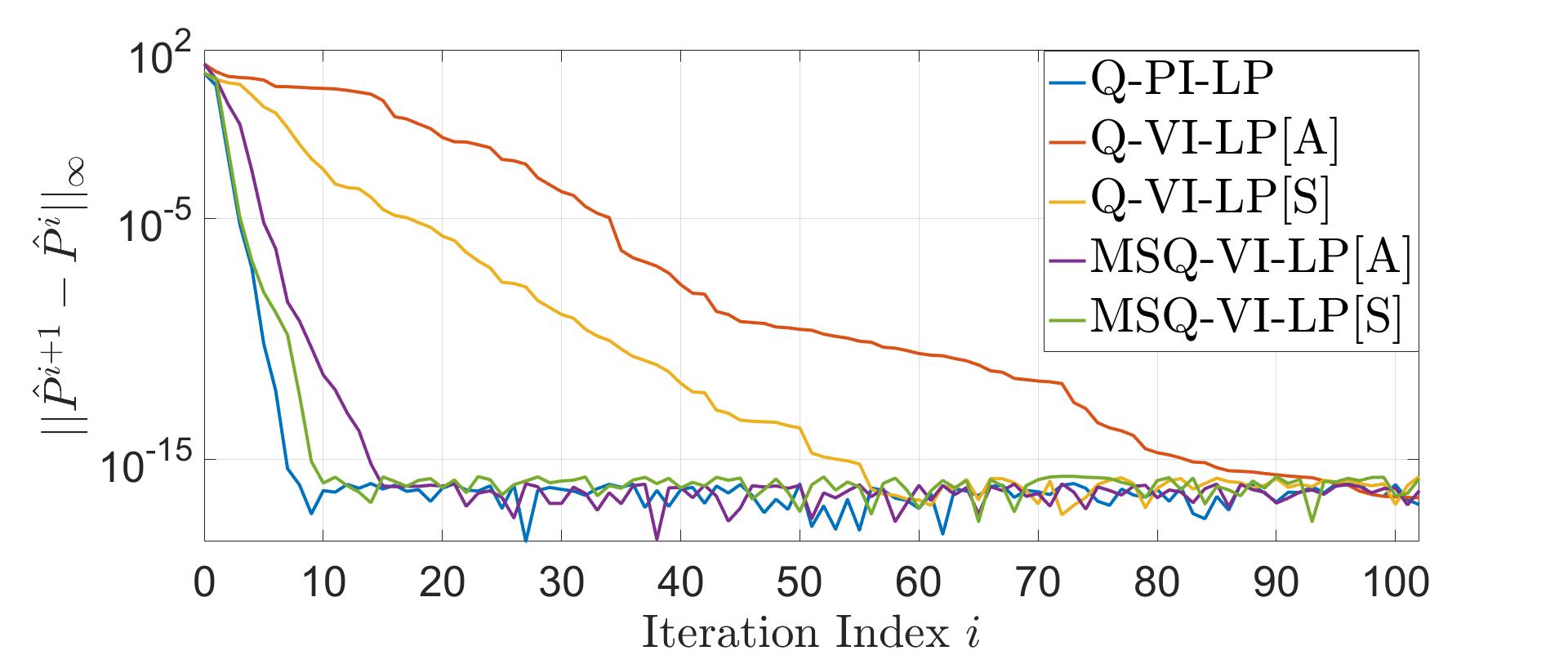} \\
\hbox{\hspace{-0.4em}}\includegraphics[width=0.478\textwidth]{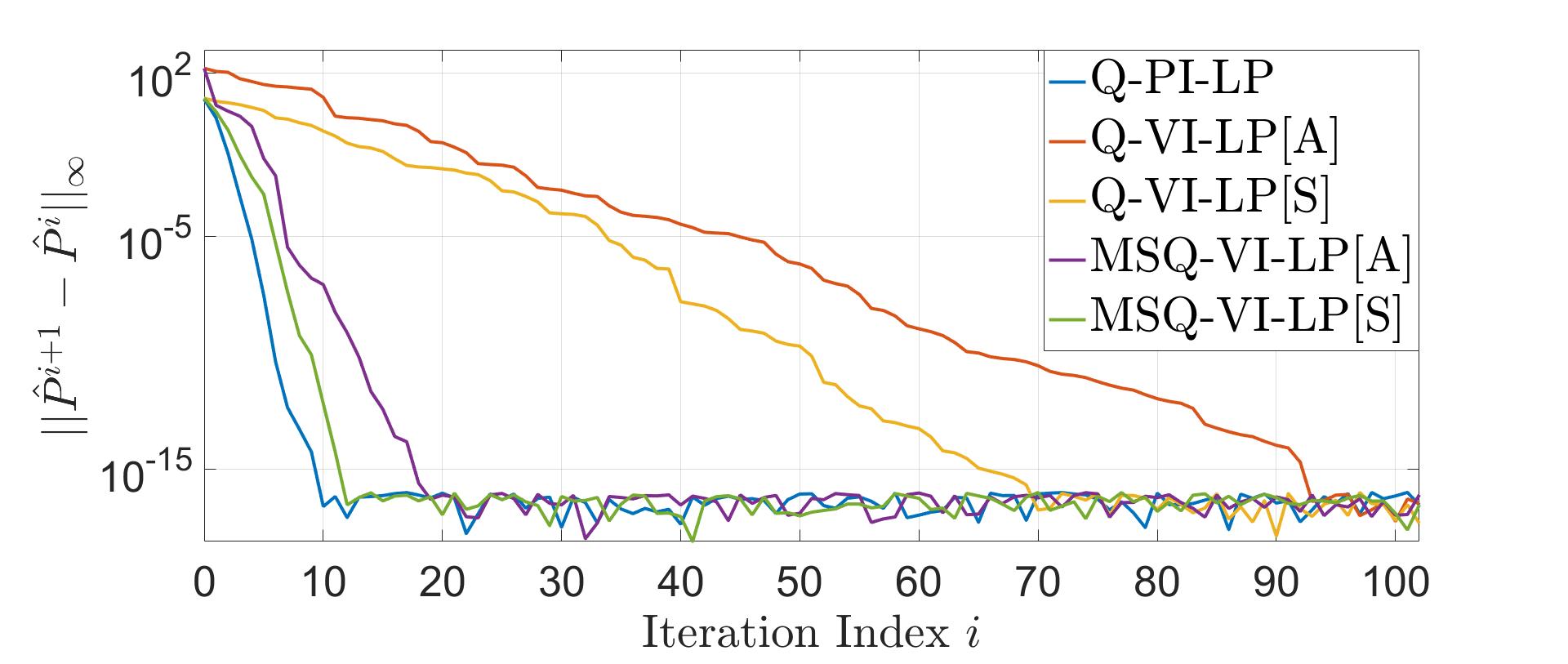}
\end{array}$
\end{centering}
\caption{Performance of LP algorithms in the unconstrained (top) and constrained-input (bottom) cases.}
\end{figure}
\\Figure $1$ shows the performance of all considered LP algorithms. The performance is evaluated based on the error between two successive Q-functions, i.e. by calculating the element-wise norm $||\hat{P}^{i+1}-\hat{P}^{i}||_{\infty}$. For the unconstrained and constrained-input cases, it is observed that:
\begin{itemize}
\item MSQ-VI-LP[A] converges in $15$ and $19$ iterations respectively,
\item MSQ-VI-LP[S] converges in $10$ and $13$ iterations respectively,
\item Q-PI-LP converges in $9$ and $12$ iterations respectively,
\item Q-VI-LP[A] converges in $84$ and $94$ iterations respectively, and finally
\item Q-VI-LP[S] converges in $55$ and $67$ iterations respectively. 
\end{itemize}
We note that in Q-PI-LP, we have to count an additional iteration which is not shown in the figure. This is because Q-PI-LP requires an iteration to compute $\hat{P}^{0}$, while for Q-VI-LP and MSQ-VI-LP $\hat{P}^{0}$ is given apriori during initialization. It is clear that the proposed MSQ-VI-LP algorithmic framework outperforms Q-VI-LP in terms of convergence speed. Furthermore, MSQ-VI-LP almost achieves the level of convergence speed of Q-PI-LP. An interesting observation is that the algorithmic variants with initial stabilizing policies benefit from an improved speed of convergence. In the unconstrained case, all LP algorithms converge to the Q-function defined by $\hat{P}^{\star}_{unc}$ given by
\begin{equation*}
\begingroup 
\setlength\arraycolsep{0.05pt}
\begin{bmatrix} 1.4919 & -0.3188 & 1.6205 & -1.5628 & 1.1226 & 1.1514 & -0.4904\\ -0.3188 & 1.4633 & 1.0812 & 2.0894 & -0.6807 & -1.2905 & 0.5798 \\ 1.6205 & 1.0812 & 1.7562 & -1.4084 & 1.1803 & 0.8366 & -0.1929 \\ -1.5628 & 2.0894 & -1.4084 & 2.3127 & -1.1903 & -0.8835 & -0.41\\ 1.1226 & -0.6807 & 1.1803 & -1.1903 & 1.1439 & 0.9940 & 0.2128 \\ 1.1514 & -1.2905 & 0.8366 & -0.8835 & 0.9940 & 0.9997 & -0.3828 \\ -0.4904 & 0.5798 & -0.1929 & -0.41 & 0.2128 & -0.3828 & 1.2541 \end{bmatrix}
\endgroup
\end{equation*}      
while in the constrained-input case all LP algorithms converge to the Q-function defined by $\hat{P}^{\star}_{con}$ given by
\begin{equation*}
\begingroup 
\setlength\arraycolsep{0.02pt}
\begin{bmatrix} 3.41 & 0.3425 & 3.1847 & -1.4843 & -0.2365  & 0.8246 & -0.353 \\ 0.3425 & 2.2134 & 0.3118 & -0.831 & -0.0006 & -0.3326 & 0.2298\\ 3.1847 & 0.3118 & 6.4281 & -4.2846 & 0.6503  & 1.0746 & -0.9133 \\ -1.4843 & -0.831 & -4.2846 & 5.4101 & -0.6801 & -0.8131 & 0.033\\ -0.2365 & -0.0006 &0.6503 &-0.6801 & 1.616 & -1.6175  & -0.388\\  0.8246 & -0.3326 & 1.0746 & -0.8131 & -1.6175 & 2.6316 & 0.0135 \\ -0.353 & 0.2298 & -0.9133 & 0.033 & -0.388 & 0.0135 & 1.9404\end{bmatrix}
\endgroup
\end{equation*} 
\\Figure $2$ shows the number of iterations which both variants of MSQ-VI-LP require to converge for the same threshold parameter ($\epsilon=10^{-15})$, as a function of the parameter $K$ used in the tuning law (26). Since for $K=0$ (i.e. $H_i=1$ for all $i$) the MSQ-VI-LP algorithm is converted to the Q-VI-LP algorithm, it is not surprising that MSQ-VI-LP[A] and MSQ-VI-LP[S] converge in the same number of iterations as Q-VI-LP[A] and Q-VI-LP[S] respectively in that scenario. As $K$ increases, the required number of iterations decreases, until the emperical bound $K=5$, after which we cannot decrease the required number of iterations further. This is an important result that will be theoretically investigated in future work.
\begin{figure}[htp]
\begin{centering}$
\hbox{\hspace{-0.25em}}\includegraphics[width=0.478\textwidth]{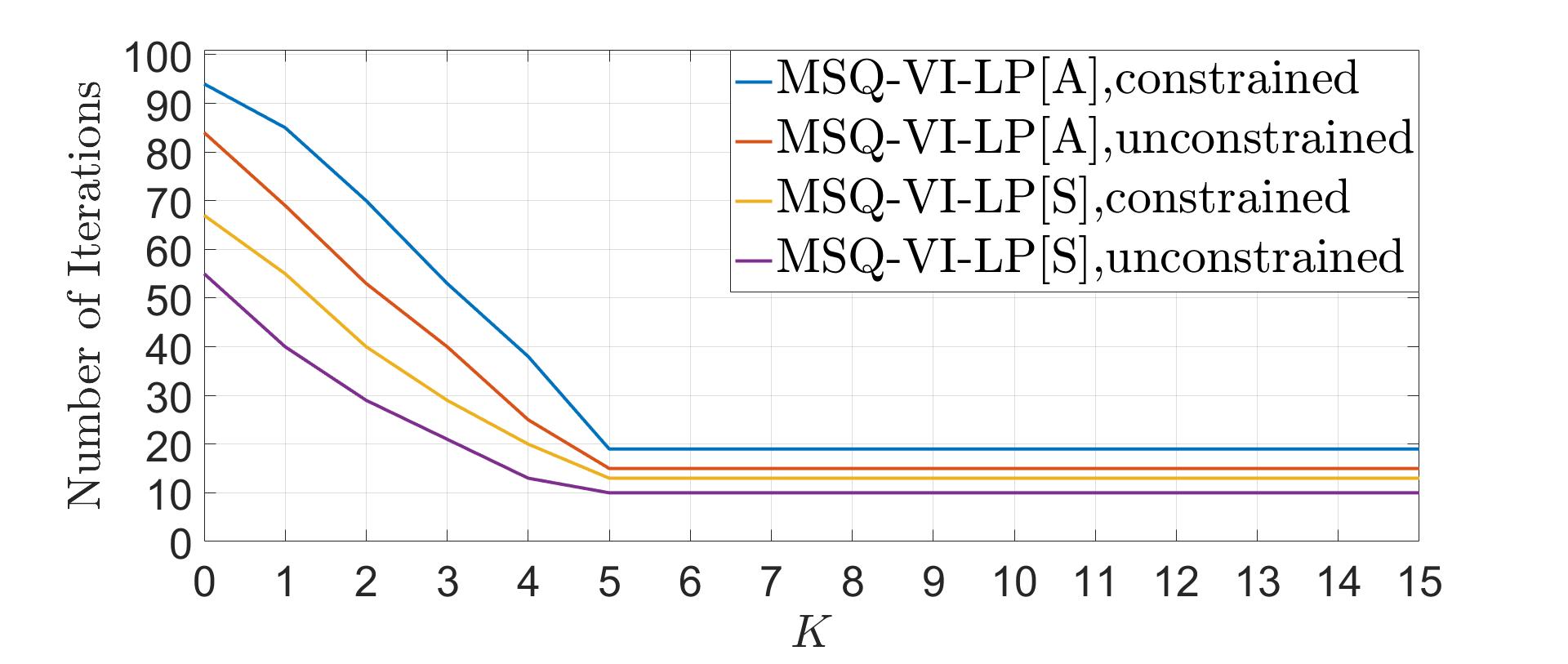}$
\end{centering}
\caption{Performance of MSQ-VI-LP variants as a function of $K$.}
\end{figure}
\\Figure $3$ shows the tracking control performance for the derived controllers.
\begin{figure}[htp]
\begin{centering}$
\begin{array}{ccc}
\hbox{\hspace{-0.0em}}\includegraphics[width=0.46\textwidth]{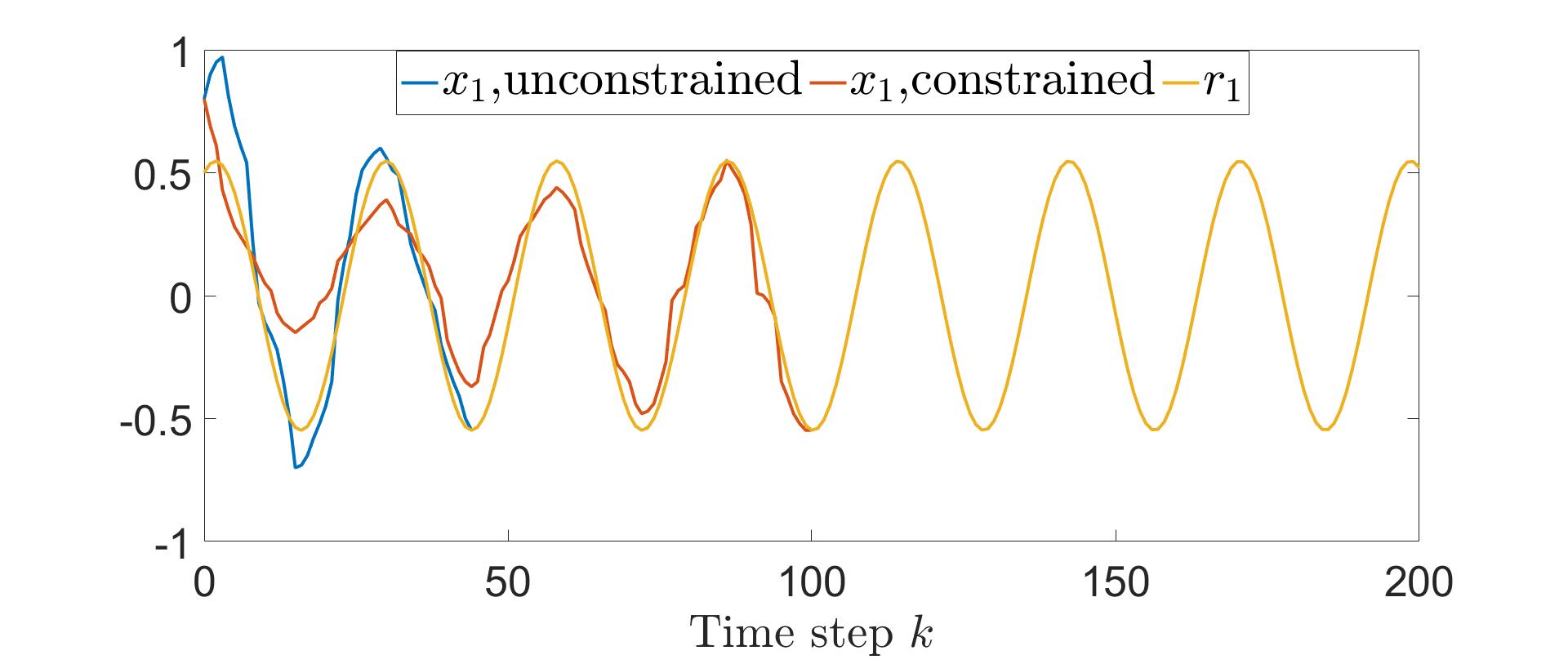} \\
\hbox{\hspace{-0.0em}}\includegraphics[width=0.46\textwidth]{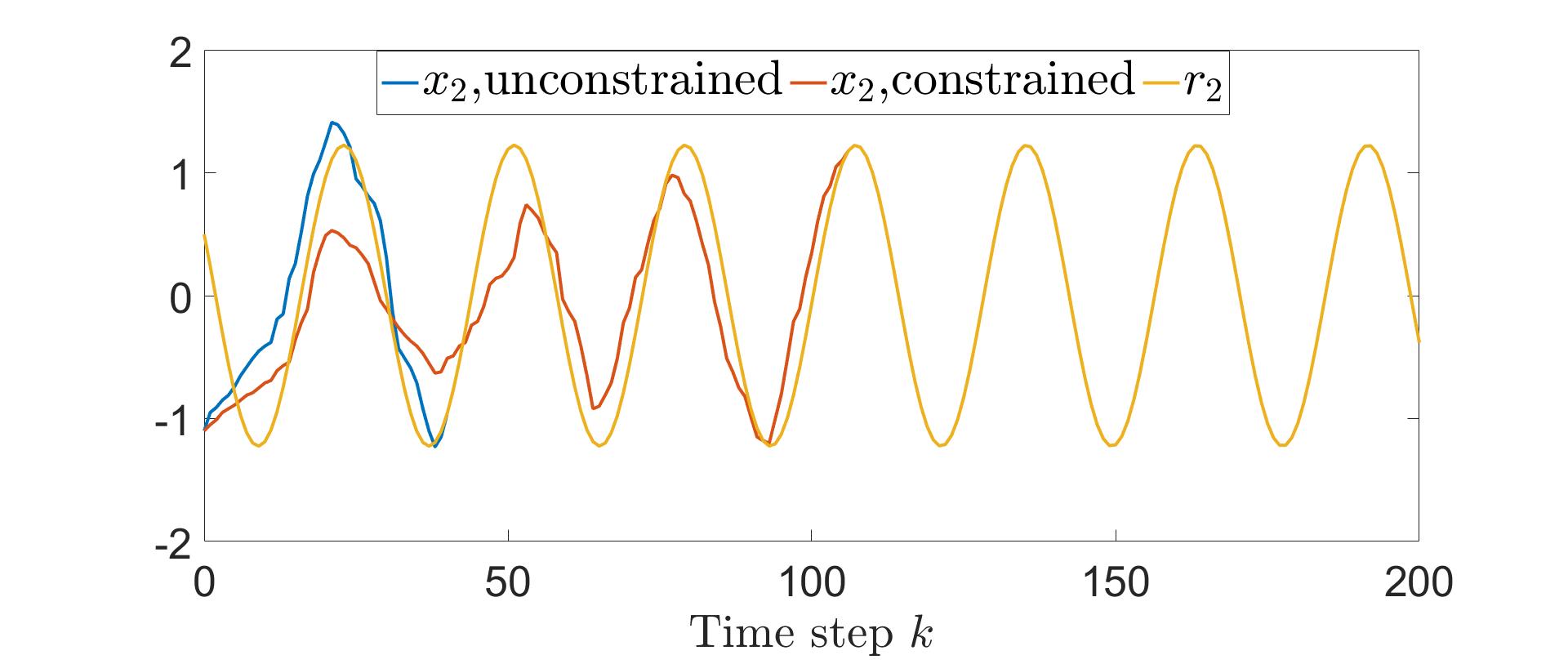}\\
\hbox{\hspace{-0.0em}}\includegraphics[width=0.46\textwidth]{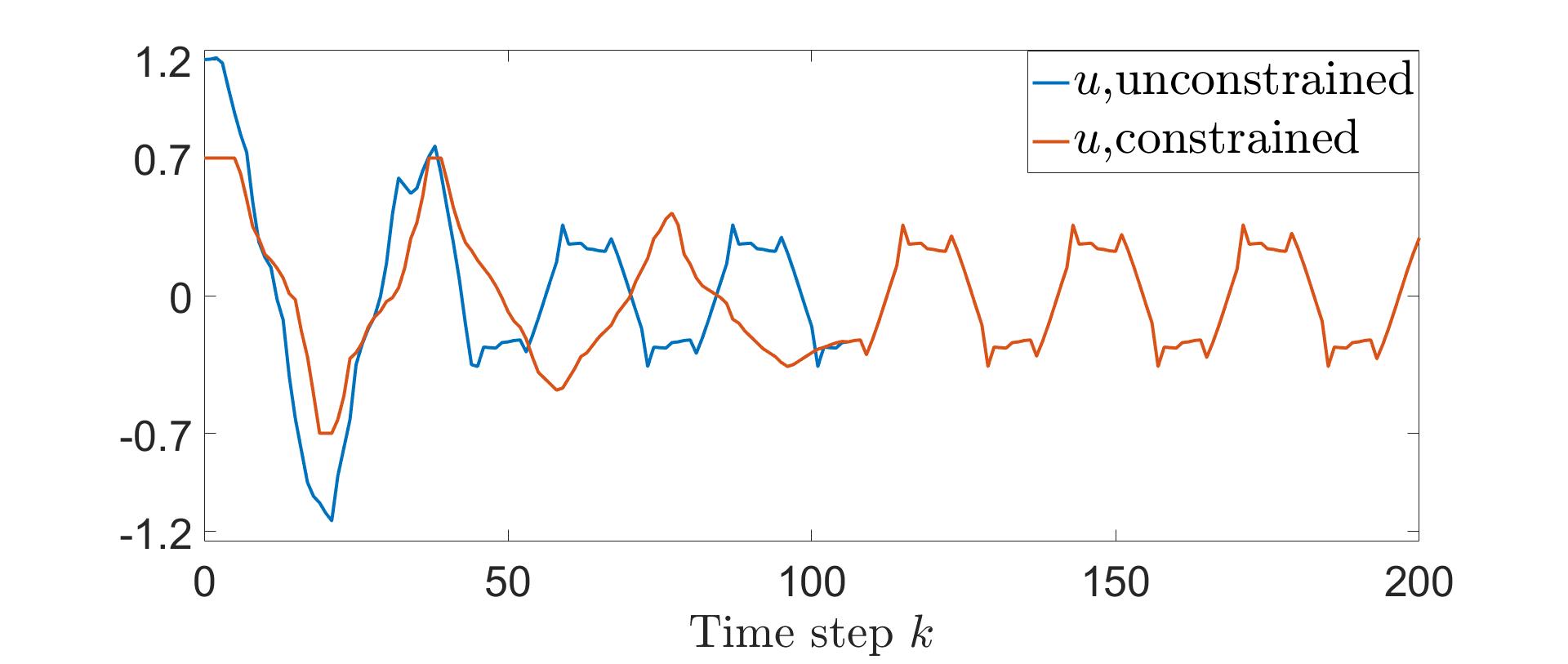}
\end{array}$
\end{centering}
\caption{Tracking control performance for both unconstrained and constrained-input cases, under the derived controllers.}
\end{figure}
It is observed that the unconstrained controller achieves optimal tracking control, but violates the considered input constraints. The derived constrained controller is capable of providing high-performance tracking behavior, while satisfying the imposed input constraints.
\section{Conclusions}
In this work, we derived a novel, high-performance multi-step VI algorithm based on Q-learning and linear programming, successfully extending the LP approach to ADP to the critical setting of optimal tracking control of general unknown discrete-time deterministic systems with general stage cost functions. We validated its performance in simulation, on both the unconstrained and constrained-input optimal tracking control of a nonlinear system. The success of the derived approach leads us to investigate possible extensions to other challenging domains, like the problem of optimal tracking control of unknown systems under adversarial disturbances using novel data-driven $H_{\infty}$ control algorithms.
 
\end{document}